\newtheorem{theorem}{Theorem}
\newtheorem{lemma}[theorem]{Lemma}
\newtheorem{observation}[theorem]{Observation}
\newtheorem{defn}[theorem]{Definition}
\title{Walks on SPR Neighborhoods}
\author{Alan Joseph J.~Caceres\thanks{Dept. of Math \& Computer Science, Lehman College, City University of New York,
Bronx, New York, 10468; Corresponding author:  {\tt stjohn@lehman.cuny.edu};
Support provided by NSF Grant \#09-20920.}
\thanks{Supported by an
undergraduate research fellowship from the Louis Stokes Alliance for Minority Participation in Research, NSF \#07-03449.} 
\and Juan Castillo\footnotemark[1] \footnotemark[2] 
\and Jinnie Lee\footnotemark[1] 
\and Katherine St.~John\footnotemark[1]}
\begin{document}
\maketitle

\begin{abstract}
A nearest-neighbor-interchange (NNI) walk is a
sequence of unrooted phylogenetic trees, $T_0, T_1, T_2, \ldots$
where each consecutive pair of trees differ by a single
NNI move.  We give tight bounds on the length of the shortest 
NNI-walks that
visit all trees in an subtree-prune-and-regraft (SPR) neighborhood
of a given tree.  
For any unrooted, binary tree, $T$, on $n$ leaves,
the shortest walk takes $\theta(n^2)$ additional
steps than the number of trees in the SPR neighborhood.
This answers Bryant's Second Combinatorial Conjecture 
from the Phylogenetics Challenges List, the Isaac Newton Institute, 2011, 
and the Penny Ante Problem List, 2009.
\end{abstract}

\section{Introduction}

\begin{figure*}
\begin{center}
    \begin{tabular}{ccccc}  
            \includegraphics[width=.725in]{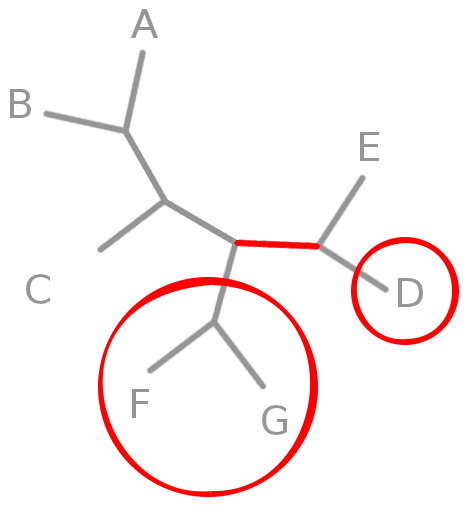}
            \begin{tabular}{c}
                $\leftarrow$\\
                \\
                \\
                \\    
            \end{tabular}
        &
        \includegraphics[width=.725in]{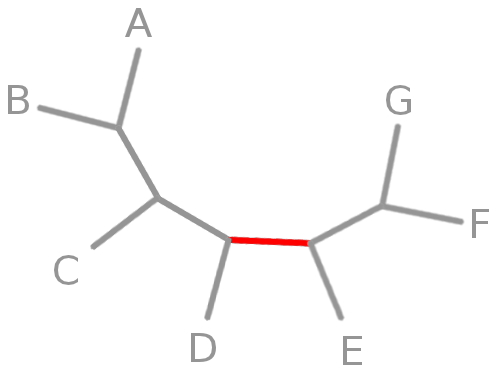} \begin{tabular}{c}
                $\leftarrow$\\
                \\
                \\
                \\    
            \end{tabular}&
        \includegraphics[width=.725in]{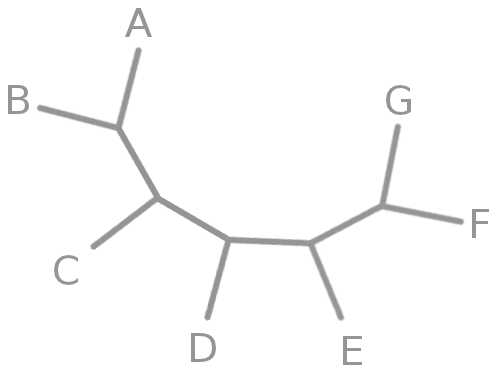} &
        \begin{tabular}{c}
                $\rightarrow$\\
                \\
                \\
                \\    
            \end{tabular} \includegraphics[width=.725in]{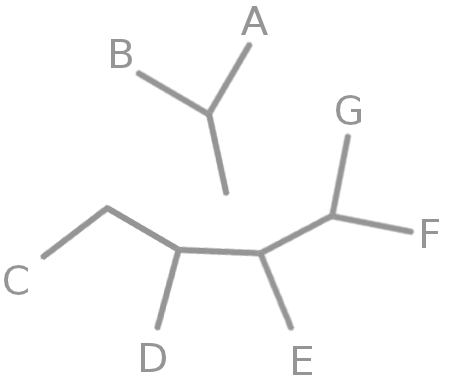} &
        \begin{tabular}{c}
                $\rightarrow$\\
                \\
                \\
                \\    
            \end{tabular} \includegraphics[width=.725in]{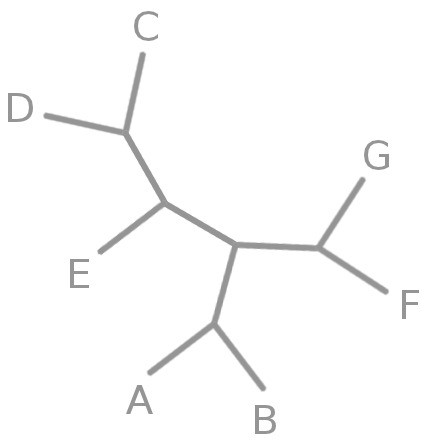} \\
    \end{tabular}
\end{center}
\caption{\small The trees on the left and center differ by a single NNI move.
The tree on the right differs by a single SPR move from the center tree.}
\label{spr_fig}
\label{nniFigure}
\end{figure*}

Evolutionary histories, or phylogenies, are essential 
structures for modern biology \cite{hillisBook}.  Finding the
optimal phylogeny is NP-hard, even when we restrict to 
tree-like evolution \cite{FouldsGraham,rochMLNP}.  As such,
heuristic searches are used to search the vast set of all trees.
There are many search techniques used (see \cite{whelan2007} for 
a survey), but most rely on local search.  That is, at each
step in the search, the next tree is chosen from the ``neighbors''
of the current tree.  A popular way to define neighbors is 
in terms of the subtree-prune-and-regraft (SPR) metric 
(defined in Section~\ref{def-section}). 
For a given unrooted tree on $n$ leaves,
or taxa, the SPR-neighborhood is the number of trees that are differ
by a single SPR move. 
The number of trees in the SPR neighborhood is $(2n-6)(2n-7)$.
The second ``Walks on Trees'' conjecture of Bryant \cite{bryantConjecture,INIChallenges} focuses on 
efficiently traversing this neighborhood via the nearest-neighbor-interchange
(NNI) tranformations (defined in Section~\ref{def-section}).  Bryant asks:
\begin{quote}
    An  “NNI-­‐walk”  is  a  sequence  $T_1,  T_2, \ldots,  T_k$  of
 unrooted  binary  phylogenetic  trees   where  each  consecutive
 pair  of  trees  differ  by  a  single  NNI. 

i. [Question] What  is  the  shortest  NNI  walk  that  passes  through  all  
binary  trees  on  $n$  leaves?

ii. [Question]  Suppose  we  are
 given  a  tree  $T$.  What  is  the  shortest  NNI   walk  that
 passes  through  all  the  trees  that  lie  at  most  one  SPR
 (subtree   prune  and  regraft)  move  from  $T$?  
\end{quote}

Bryant's conjectures were posed as part of the New Zealand Phylogenetic
Meetings' Penny Ante Problems \cite{bryantConjecture} as well as the 
Challenges problems from the most recent Phylogenetics Meeting
at the Isaac Newton Institute \cite{INIChallenges}.

We answer the second question, proving that the shortest walk takes
$\Theta(n^2)$ more steps than the theoretical minimum that visits
every tree exactly once (that is, a Hamiltonian path).  This builds on
past work \cite{walksPaper} that showed that a Hamiltonian path was
not possible.

\section{Background}
\label{def-section}

This section includes definitions and results that we use 
from Allen \& Steel \cite{allenSteel}.  For a more detailed background
on mathematically phylogenetics, see
Semple \& Steel \cite{sempleSteel}.

\begin{defn}
An unrooted binary phylogenetic tree (or more briefly a binary tree) is a tree whose leaves (degree 1 vertices) are labelled bijectively by a (species) set $S$, and such that each non-leaf vertex is unlabelled and has degree three. We let $UB(n)$ denote the set of such trees for $S = \{1,\ldots,n\}$.
\end{defn}

Each internal edge, $e$ of a tree $T \in UB(n)$ yields a natural bipartition, or {\bf split} of the taxa;  We write $A\mid B$ if there is an 
edge which partitions the leaf set into the two sets $A$ and $B$.
We use the standard notation of $T_A$ to refer to the smallest subtree of $T$ containing
leaves only from $A$.

Figure~\ref{spr_fig} shows several binary trees. 
Each edge of a tree induces a {\bf split} of the leaf set $S$.
The \textit{Nearest Neighbor Interchange} (NNI) is a distance metric introduced independently by DasGupta {\em et al.} \cite{dasgupta2000} and Li {\em et al.} \cite{li1996}. 
Roughly, an NNI operation swaps two subtrees that are separated by an internal edge in order to generate a new tree \cite{allenSteel}.

\begin{defn} Allen and Steel \cite{allenSteel}:
Any internal edge of an unrooted binary tree has four subtrees attached to it. A {\bf nearest neighbor interchange} (NNI) occurs when one subtree on one side of an internal edge is swapped with a subtree on the other side of the edge, as illustrated in Figure~\ref{nniFigure}.
\end{defn}

\begin{defn}
The {\bf NNI distance}, $d_{NNI}(T_1,T_2)$, between two trees $T_1$ and $T_2$ is defined as the minimum number of NNI operations required to change one tree into the other.
\end{defn}

The complexity of computing the NNI distance was open for over 25 years, and was proven to be NP-complete by Allen and Steel \cite{allenSteel}. For a tree with $n$ uniquely labeled leaves, there are $n-3$ internal branches. Thus, there are $2(n-3)$ NNI rearrangements for any tree.

One of the most popular moves used to search treespace is the Subtree-Prune-and-Regraft (SPR).  Roughly, an SPR move prunes  a  selected subtree and then reattaches it  on an edge selected from the remaining tree (see Figure~\ref{spr_fig}).

\begin{defn} Allen and Steel \cite{allenSteel}:
A {\bf subtree prune and regraft} (SPR) on a phylogenetic tree $T$ is defined as cutting any edge and thereby pruning a subtree, $t$, and then regrafting the subtree by the same cut edge to a new vertex obtained by subdividing a pre-existing edge in $T$--$t$. We also apply a forced contraction to maintain the binary property of the resulting tree  (see Figure~\ref{spr_fig}).
\end{defn}

\begin{defn}
The {\bf SPR distance}, $d_{SPR}(T_1,T_2)$, between two trees is the minimal number of SPR moves needed to transform the first tree into the second tree. 
\end{defn}

\begin{figure*}
\label{nbhd-fig}
\begin{center}
\begin{tabular}{cc}
\includegraphics[height=1.5in]{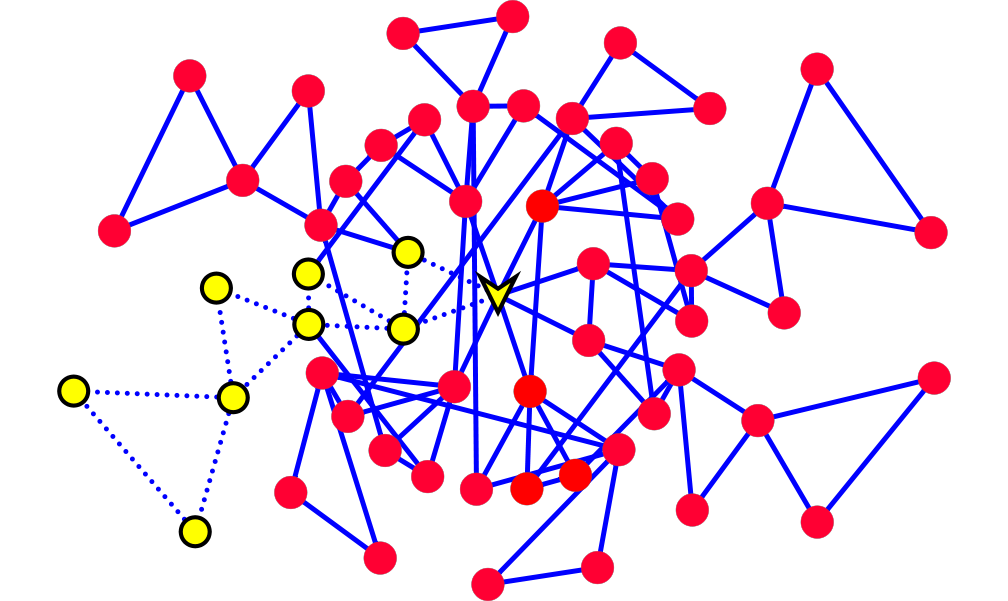} \hfill
\includegraphics[height=1.5in]{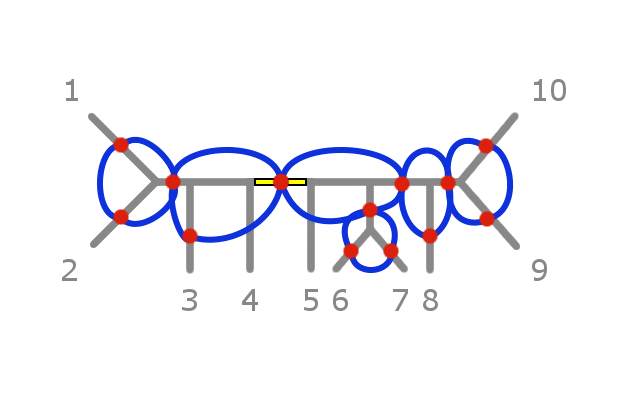}
\end{tabular}
\end{center}
\caption{\small Left: The SPR-neighbor of a 7-leaf caterpillar tree. The highlighted nodes show
the trees in the orbit that prunes a leaf from one of the sibling pairs.  We note that any $NNI$-walk that visits 
every tree in this SPR neighborhood will visit some trees more than once.
Right: The orbit of the edge,
$e = 1,2,3,4 \mid 5,6,7,8,9,10$
from the tree $(1,(2,(3,(4,(5,((6,7),(8,(9,10))))))))$ shown with respect to the tree.
The tree is shown in the background with edge $e$ highlighted.  An SPR move is determined
both by the edge pruned and the target edge of the regrafting.  The trees in the orbit
(red dots) are shown relative the regrafting edge in the initial tree.  The blue lines
indicate NNI-edges in the orbit.  We note that the edges adjacent to $e$ yield the initial
tree when used as the target edge, so, do not produce a new tree.}
\end{figure*}

The calculation of SPR distances has been proven NP-complete for both rooted and unrooted trees \cite{bordewichSemple,hickey2008}. Approximation algorithms for SPR on rooted trees exist \cite{approx,bordewich2008}.

\begin{defn}   Let $T_0$ be an unrooted, binary tree.
Define $N_{SPR}(T_0)$ to be the {\bf SPR-neighborhood}
of $T_0$, namely,
$$
  N_{SPR}(T_0) = \{ T \mid d_{SPR}(T_0, T) = 1 \}
$$
\end{defn}

When the tree in question is obvious, we will drop the argument and call
the neighborhood $N_{SPR}$.

\begin{defn}   Let $T_0$ be an unrooted, binary tree and 
and $S$ be a set of trees that are 1 SPR move from $T_0$.
Define $N_{NNI}(S, T_0)$ to be the {\bf NNI-neighbors}
of $S$, namely, 
$$
  \begin{array}{rl}
  N_{NNI}(S, T_0) = &\{ T \mid \exists T' \in S, d_{NNI}(T, T') = 1 \\
                    &\mbox{ and } d_{SPR}(T_0, T') = 1 \}
  \end{array}
$$
\end{defn}

We note that for every subset $S$ of the SPR neighborhood of $T_0$,
$S \subseteq N(S)$.  

A ``sibling pair'' or ``cherry'' in a tree are two leaves that have 
the same parent.
A ``caterpillar tree'' refers to the unrooted tree with exactly 2 sibling pairs.

\section{Results}

Theoretically, the shortest walk of the SPR neighborhood would be if each tree could be
visited exactly once (that is, a ``Hamiltonian path'').  In \cite{walksPaper}, 
this was shown to be impossible for $n \geq 6$.  
This was done by 
showing that in the SPR neighborhood of caterpillar trees, there are 
at least 4 `isolated triangles' on the outer edge of the neighborhood 
(see Figure~\ref{nbhd-fig}) that force at least
two trees to be visited twice.

To bound the number of steps needed to visit the SPR neighborhood, we 
first introduce a new concept of an orbit of an edge.  The orbit is all trees
created by breaking that edge (in either direction) in an SPR move.  More
formally:

\begin{defn}  Define for each edge $e$ of the tree $T_0$, the {\bf orbit} of $e$, 
$O_e$,
to be all the trees that are one SPR move from $T_0$ where the edge broken
by the SPR move is $e$.  
\end{defn}

In Figure~\ref{nbhd-fig}, the SPR-neighbohood, as well as orbits, of
the 7-taxa caterpillar tree are shown.  

To calculate the size of the SPR neighborhood of a tree, 
Allen and Steel (proof of Theorem~2.1, \cite{allenSteel}) characterized 
the relationship between the trees in the neighborhood.  

\begin{theorem} Allen and Steel \cite{allenSteel}: Let $T_0$ be an unrooted
  phylogenetic tree on $n$ leaves and let $N_{SPR} = N_{SPR}(T_0)$ be all trees
  that are a single SPR move from $T_0$.
  \begin{enumerate}
    \item The size of the SPR neighborhood is $|N_{SPR}| = 2(n-3)(2n-7)$. 
    \item The number of trees in $N_{SPR}$ that can be 
        obtained by more than one SPR move from $T_0$ are exactly 
        those from the NNI transformations. Thus, there are $2n-6$ of them.
    \item The number of trees in $N_{SPR}$ that can be
        obtained by only one SPR move from $T_0$ is $4(n-3)(n-4)$.
  \end{enumerate}
   \label{as-thm}
\end{theorem}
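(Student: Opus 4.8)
The plan is to count \emph{SPR operations}---pairs $(t,f)$ consisting of a pruned subtree $t$ (obtained by cutting a single edge of $T_0$) together with a target edge $f$ in the remaining tree onto which $t$ is regrafted---rather than the neighbors themselves, and then to divide out by the number of operations that produce each distinct neighbor. Writing $M$ for the number of operations that yield a tree different from $T_0$, the three claims follow once I know (a) the value of $M$ and (b) exactly how many operations realize each neighbor. The key structural input, established below, is that a neighbor is produced by more than one operation precisely when it is an NNI-neighbor, and then by exactly four operations, while every other neighbor is produced by a unique operation. Granting this, all three counts drop out of a single linear relation.

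First I would compute $M$ by a case analysis on the pruned edge. An unrooted binary tree on $n$ leaves has $2n-3$ edges, of which $n$ are pendant and $n-3$ internal. Cutting a pendant edge prunes a single leaf; the remaining tree has $n-1$ leaves and hence $2n-5$ edges, and exactly one regraft (onto the edge created by suppressing the old attachment vertex) returns $T_0$, leaving $2n-6$ nontrivial targets. Cutting an internal edge admits two prunings, whose pruned subtrees have $k$ and $n-k$ leaves; the corresponding remaining trees have $2(n-k)-3$ and $2k-3$ edges, so after removing the two trivial regrafts the two prunings together contribute $\bigl(2(n-k)-4\bigr)+\bigl(2k-4\bigr)=2n-8$ targets, \emph{independently of the subtree sizes $k$}. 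This cancellation is what makes the total depend only on $n$:
\[
  M = n(2n-6) + (n-3)(2n-8) = 4(n-2)(n-3).
\]

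Next I would prove the structural dichotomy, which is the heart of the argument. Using that an NNI move replaces exactly one split of $T_0$ by one new split, I would show that $T'\in N_{SPR}(T_0)$ is an NNI-neighbor if and only if $T_0$ and $T'$ differ in exactly one split, and that this happens exactly when the regraft edge is adjacent to the pruned location (the subtree is moved ``one step''). In that adjacent case the four subtrees $A,B$ (at one end $u$ of the edge) and $C,D$ (at the other end $v$) give, by the symmetry of the swap, exactly four realizations of the same NNI-neighbor---prune $B$ and regraft onto an edge incident to $v$, prune $A$ and regraft onto an edge incident to $v$, together with the two mirror operations that prune $C$ or $D$ and regraft onto an edge incident to $u$---which I would verify by comparing split sets. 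When instead the subtree is moved across a path of length at least two, at least two splits of $T_0$ are destroyed, and the moved subtree together with its destination is recovered uniquely from the symmetric difference of the two split systems; hence such a neighbor has a unique realization. Counting NNI-neighbors is then immediate: each of the $n-3$ internal edges contributes two, giving $2(n-3)=2n-6$, and a short split-comparison argument (different edges remove different splits, and the two swaps on one edge insert different splits) shows these $2n-6$ trees are pairwise distinct.

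Finally I would combine the pieces. Let $U$ be the number of neighbors realized by a unique operation. Since the $2n-6$ NNI-neighbors are each realized four times and all remaining neighbors once, $M=4(2n-6)+U$, so
\[
  U = 4(n-2)(n-3) - 8(n-3) = 4(n-3)(n-4),
\]
which is part~(3); part~(2) is the count $2n-6$ together with the multiplicity statement; and part~(1) follows by adding the two disjoint families of neighbors,
\[
  |N_{SPR}| = (2n-6) + 4(n-3)(n-4) = (2n-6)(2n-7).
\]
The main obstacle is the structural dichotomy of the third paragraph---specifically, proving that a non-adjacent (non-NNI) move has a \emph{unique} realization while an adjacent move has \emph{exactly} four. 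Everything else is bookkeeping, but this uniqueness/multiplicity claim requires a careful analysis of how an SPR operation transforms the split system of $T_0$, and it is where I would expect to spend most of the effort.
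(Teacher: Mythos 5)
The paper does not actually prove this theorem itself: it is imported verbatim from Allen and Steel, who establish it in the proof of their Theorem~2.1, so the only meaningful comparison is with that source. Your proposal is correct and is essentially a reconstruction of that same argument: count the SPR operations (your $M = 4(n-2)(n-3)$, which is right — the cancellation making the internal-edge contribution $2n-8$ independent of the subtree sizes is the key bookkeeping step), show that a neighbor has multiple realizations exactly when it is one of the $2n-6$ NNI-neighbors and then has exactly four, and recover $|N_{SPR}| = 4(n-2)(n-3) - 3(2n-6) = 2(n-3)(2n-7)$ together with parts (2) and (3). You also correctly locate the one genuinely nontrivial burden — the four-versus-one multiplicity dichotomy — and your sketch of it (adjacent regraft $\Leftrightarrow$ exactly one split changes $\Leftrightarrow$ NNI; the four symmetric realizations prune $A$, $B$, $C$, or $D$; non-adjacent moves change at least two splits and are recoverable uniquely) is sound and is precisely the split-system analysis Allen and Steel carry out.
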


From this, we make the following simple observations about orbits:

\begin{observation}  Let $T_0$ be an unrooted phylogenetic tree on $n$ leaves:
  \begin{enumerate}
    \item Every tree $T \in N_{SPR}$ belongs to some
      orbit $O_e$ for $e \in E(T_0)$.
    \item Each orbit contains $T_0$.
    \item Excluding $T_0$, there are exactly $2n-6$ trees that are included
        in two orbits.
    \item The number of orbits is the number of edges in the tree, $2n-3$.
    \item The size of each orbit is $2n-5$.
  \end{enumerate}
  \label{obs}
\end{observation}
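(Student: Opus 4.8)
The plan is to prove the five parts almost entirely from the definitions, using Theorem~\ref{as-thm} as the one external input needed for the enumerative items (3) and (5). I would dispatch items (1), (2), and (4) first, since they are purely structural. For item (1), every $T\in N_{SPR}$ is by definition obtained from $T_0$ by a single SPR move, and every SPR move cuts exactly one edge $e$ of $T_0$; that $e$ witnesses $T\in O_e$. For item (4), the orbits are indexed by the edges of $T_0$, and an unrooted binary tree on $n$ leaves has $n$ pendant and $n-3$ internal edges, hence $2n-3$ edges and $2n-3$ orbits. For item (2), fix any edge $e=(u,v)$; cutting $e$ prunes one of its two sides and leaves a degree-two vertex whose incident edges, when chosen as the regraft target, reattach the pruned subtree in its original location and, after the forced contraction, return $T_0$, so $T_0\in O_e$ for every $e$.

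The technical heart is item (5), and through it item (3). Fixing $e=(u,v)$, I would parametrize $O_e$ by the SPR moves that cut $e$: prune the subtree $t$ on one side and regraft it by subdividing an edge of the contracted complement $T_0-t$. Because $t$ is reattached intact at a single edge, the bipartition induced by $e$ is preserved, so every element of $O_e$ retains the split of $e$; this confines $O_e$ and lets me enumerate it by listing the admissible regraft targets, discarding the targets adjacent to the cut that merely recreate $T_0$, and checking that the surviving targets give pairwise distinct trees. Carrying this out requires separating the pendant-edge case, where one side is a single leaf and only one prune-direction is non-degenerate, from the internal-edge case, where both sides carry nontrivial subtrees, and then confirming that in each case the admissible targets yield exactly $2n-5$ trees once $T_0$ is counted a single time. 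The step I expect to be the main obstacle is precisely this bookkeeping: showing that distinct admissible targets never collide (so there is no hidden overcount), correctly identifying all targets near the cut that collapse to $T_0$, and reconciling the two prune-directions so that the pendant and internal counts agree on the stated value $2n-5$.

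For item (3) I would lean on Theorem~\ref{as-thm}(2): the only trees of $N_{SPR}$ reachable from $T_0$ by more than one SPR move are the $2n-6$ NNI-neighbors. A tree lies in two orbits exactly when it is produced by SPR moves cutting two different edges, so every two-orbit tree is reachable by more than one SPR move and is therefore one of these $2n-6$ trees. To finish I would prove the converse and the exact multiplicity: each NNI-neighbor, obtained by swapping subtrees across an internal edge, is realized by cutting each of the two edges flanking that internal edge and by no others, placing it in exactly two orbits, while a tree reachable by a unique SPR move lies in a single orbit. Matching this against the count $2n-6$ from Theorem~\ref{as-thm} then yields item (3). The main obstacle here is the converse: pinning down that the two realizations of each NNI-neighbor cut exactly two distinct edges, and that no single-move tree is accidentally reproduced by cutting a second edge, which is exactly the overlap structure that the orbit formalism is designed to expose.
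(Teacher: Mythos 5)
Your handling of items (1), (2), and (4) is fine, and your strategy for (5) --- parametrize $O_e$ by prune direction and regraft target, discard the targets adjacent to the cut, and verify distinctness --- is exactly the right computation to do. The problem is that the step you flag as the ``main obstacle'' is not a bookkeeping difficulty: carried out honestly, the computation gives a different answer than the one you are trying to reconcile it with. Fix an \emph{internal} edge $e$ whose removal splits the leaves into sets of sizes $a$ and $b$, $a+b=n$, $a,b\ge 2$. Pruning the $a$-side leaves a binary tree with $2b-3$ edges; the merged edge created by the forced contraction returns $T_0$, and the other $2b-4$ targets give pairwise distinct new trees. The opposite direction gives $2a-4$ new trees, and the two families are disjoint (in a direction-one tree the connecting edge meets the pruned side at its original attachment point and the other side at a new point; in a direction-two tree it is the reverse). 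Hence $|O_e|=1+(2a-4)+(2b-4)=2n-7$, not $2n-5$; the value $2n-5$ is attained only by the $n$ pendant edges. So no amount of care will make ``the pendant and internal counts agree on the stated value'': item (5) as stated is false for internal edges, and your own method is what exposes this. (The paper's Figure~\ref{nbhd-fig} confirms it: for $n=10$ and the internal edge shown there are $17-5=12$ admissible targets, i.e.\ $13$ trees in that orbit, not $2n-5=15$.)

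The converse you propose for item (3) also fails. Let the NNI neighbor $T_1$ be obtained by swapping subtrees $Q$ and $R$ across the internal edge with endpoints $x,y$, where $P,Q$ hang at $x$ and $R,S$ at $y$. Then $T_1$ is realized not only by cutting $x$--$Q$ (regraft $Q$ next to $S$) and $y$--$R$ (regraft $R$ next to $P$), but also by cutting $x$--$P$ (regraft $P$ next to $R$) and $y$--$S$ (regraft $S$ next to $Q$): all four produce the arrangement $\{P,R\}\mid\{Q,S\}$. So each NNI neighbor lies in exactly \emph{four} orbits, and ``by no others, placing it in exactly two orbits'' is wrong. What does survive is the weaker reading of (3): the trees lying in more than one orbit are exactly the $2n-6$ NNI neighbors, by your first direction (a multi-orbit tree is reachable by more than one SPR move, then apply Theorem~\ref{as-thm}(2)) together with the fixed-cut-edge injectivity you need for (5) anyway. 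With the corrected values everything is consistent with Theorem~\ref{as-thm}:
\begin{equation*}
n(2n-6)+(n-3)(2n-8)-3(2n-6)\;=\;4n^2-26n+42\;=\;2(n-3)(2n-7),
\end{equation*}
whereas the multiplicities you (and the observation) assert give $(2n-3)(2n-6)-(2n-6)=4n^2-20n+24$, which does not match Allen--Steel's count. The right conclusion of your analysis is a corrected observation --- orbit sizes $2n-5$ for pendant and $2n-7$ for internal edges, NNI neighbors in four orbits each --- not a proof of the statement as written.
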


The SPR neighborhood is the union of the orbits, but surprisingly, these
orbits are mostly disjoint. Roughly,
the overlap of orbits is very small and they have very few neighbors in
common.  Formally:
  
\begin{lemma}  Let $T_0$ be an unrooted phylogenetic tree on $n$ taxa.
    Let $T_1, T_2 \in N_{SPR}(T_0)$ and there exists $e \in E(T_0), T_1,T_2 \in O_e$.  
    Let $e_i$ be the target edge of
    the move that created $T_i$ for $i=1,2$
    (that is, $T_1$ is formed by grafting some
    pruned subtree of $T_0$ to $e_1$ and $T_2$ is the result of grafing a 
    pruned subtree to $e_2$).

    Then, $T_1$ and $T_2$ differ by a single NNI move if and only if 
    $e_1$ and $e_2$ have a common endpoint in $T_0$.
    \label{overlap-lemma}
\end{lemma}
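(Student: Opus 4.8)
The plan is to translate NNI-adjacency into a comparison of split systems and then read off the answer from where each pruned block is reattached. I will use the standard criterion that two binary trees on the same leaf set lie at NNI distance one if and only if their sets of nontrivial splits differ in exactly one element (that is, the symmetric difference of the two split systems has size two): if the trees share all but one split, contracting the lone disagreeing edge in each yields the same tree with a single degree-four vertex, of which $T_1$ and $T_2$ are the two distinct binary resolutions, and any two such resolutions differ by one NNI; the converse is immediate because an NNI alters exactly one split. Fix $e=uv\in E(T_0)$ and let $A,B$ be the two subtrees meeting $u$ and $C,D$ the two meeting $v$, with leaf sets $L_A,L_B,L_C,L_D$. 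Each tree of $O_e$ is obtained by cutting $e$, pruning the $v$-block (on leaves $L_C\cup L_D$) or the $u$-block (on $L_A\cup L_B$), suppressing the exposed degree-two vertex, and regrafting the block onto a target edge of the opposite side. Inspecting the edge created by the regraft shows that every tree of $O_e$ retains the split $\sigma_e=(L_A\cup L_B)\mid(L_C\cup L_D)$, so all disagreements between $T_1$ and $T_2$ lie among the other splits.

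Suppose first that $T_1$ and $T_2$ prune the same side, say both relocate the $v$-block into the contracted $u$-side $R$, attaching it at the target edges $e_1$ and $e_2$. Comparing split systems, I would show that an edge $f$ of $R$ carries different splits in $T_1$ and $T_2$ exactly when $f$ separates the two attachment points, while the attachments themselves contribute exactly one further pair of differing (graft) splits; hence the symmetric difference has size two precisely when no edge of $R$ lies strictly between $e_1$ and $e_2$, i.e. when $e_1$ and $e_2$ share an endpoint, and has size at least four otherwise. In the adjacent case I would also make the NNI explicit: at the shared endpoint $z$, performing the interchange on the edge joining $z$ to the attachment vertex on $e_1$, swapping the grafted block with the branch toward $e_2$, produces $T_2$. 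Because target edges incident to $e$ merely reproduce $T_0$ (Observation~\ref{obs}) and are excluded, the only suppressed vertex is $u$, so adjacency of $e_1,e_2$ in $R$ agrees with adjacency in $T_0$, as required.

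It remains to treat the case where $T_1$ and $T_2$ prune opposite sides of $e$. Then $e_1$ lies on the $u$-side and $e_2$ on the $v$-side; these are separated by $e$, so $e_1$ and $e_2$ can never share an endpoint in $T_0$, and I must confirm that $T_1$ and $T_2$ are not NNI-adjacent. Since $T_1$ keeps the $v$-block $C,D$ intact whereas $T_2$ suppresses $v$, at least one of $L_C\mid\overline{L_C}$ and $L_D\mid\overline{L_D}$ belongs to $\mathrm{splits}(T_1)\setminus\mathrm{splits}(T_2)$; moreover regrafting the whole $v$-block into the $u$-side creates a mixed split placing $L_C\cup L_D$ together with a proper nonempty part of $L_A\cup L_B$, and such a split cannot occur in $T_2$, where $L_A\cup L_B$ is grafted as a single unit, so no split groups part of $L_A\cup L_B$ with $L_C\cup L_D$ against the remainder of $L_A\cup L_B$. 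These two splits are distinct, giving $|\mathrm{splits}(T_1)\setminus\mathrm{splits}(T_2)|\ge 2$ and hence symmetric difference at least four, so $d_{NNI}(T_1,T_2)\ge 2$.

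The opposite-side subcase is where I expect the real work to lie: once $\sigma_e$ is fixed the same-side count is essentially forced, but here one must certify that the graft-induced ``mixed'' split is genuinely new and distinct from the block-top split, ruling out the borderline possibility of a size-two symmetric difference. The structural reason is that in $T_1$ the leaves $L_C\cup L_D$ hang from a single vertex as a clustered block while in $T_2$ that block is torn apart by the inserted $u$-side, so no single NNI can repair the discrepancies on both sides of $e$ at once; making this precise via the two exhibited splits is the crux of the argument.
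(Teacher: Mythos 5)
Your framework---reducing NNI-adjacency to ``the two split systems differ in exactly one nontrivial split,'' then counting disagreements according to where each block is reattached---is sound, and your same-side analysis (symmetric difference $=2$ plus twice the number of edges of $R$ strictly between the attachment points) is a valid and genuinely different route from the paper, which instead does an explicit split-by-split case analysis and an explicit subtree swap. However, your opposite-side case, which you yourself flag as the crux, contains a real error. You claim that in $T_2$, where the block on $L_A\cup L_B$ is regrafted into the $v$-side, ``no split groups part of $L_A\cup L_B$ with $L_C\cup L_D$ against the remainder of $L_A\cup L_B$.'' That is false: the pruned block is reattached \emph{intact}, with its internal vertex $u$ retained, so every edge interior to the block induces exactly such a split of $T_2$---for instance the edge from $u$ to the root of $A$ induces $L_A \mid L_B\cup L_C\cup L_D$, which groups $L_B$ with $L_C\cup L_D$ against $L_A$. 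Worse, one of the two graft splits of $T_1$ is of this type and typically \emph{is} a split of $T_2$: if $X$ is the cluster of leaves hanging below the target edge $e_1$, then the split $X \mid (L_A\cup L_B\setminus X)\cup L_C\cup L_D$ is induced in $T_2$ by the edge $e_1$ itself, which survives unsubdivided inside the regrafted block. So exhibiting an arbitrary ``mixed'' split of $T_1$ does not certify membership in $\mathrm{splits}(T_1)\setminus\mathrm{splits}(T_2)$, and your bound $d_{NNI}(T_1,T_2)\ge 2$ is not established as written.

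The conclusion is true and the argument is repairable, but you must choose the \emph{other} graft split: $X\cup L_C\cup L_D \mid (L_A\setminus X)\cup L_B$ (taking $e_1$ inside $A$, say), i.e.\ the one pairing $L_C\cup L_D$ with the cluster \emph{below} the attachment point. Its pure side $(L_A\setminus X)\cup L_B$ contains leaves of both $A$ and $B$ yet omits part of $L_A$; since the clusters cut off by edges interior to the intact block in $T_2$ are subsets of $L_A$, subsets of $L_B$, or the full sets $L_A$, $L_B$, and every other edge of $T_2$ keeps $L_A\cup L_B$ entirely on one side, no edge of $T_2$ induces this split (equivalently, in $T_2$ the minimal subtrees spanning its two sides both contain $u$). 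That split is nontrivial, is distinct from your first exhibited split ($L_C$ or $L_D$ against the rest, whichever subtree receives $e_2$), and together they give symmetric difference at least four. For comparison, the paper sidesteps this case almost entirely: in its forward direction it argues opposite-direction prunes cannot both yield nondegenerate trees differing by one NNI, and in its backward direction such configurations collapse to trivial identities; so the care you anticipated here is indeed needed, but your specific certificate was the wrong one of the two candidates.
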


\begin{proof}

    $\Longleftarrow$:  Assume that $e_1$ and $e_2$ have a common endpoint in $T_0$.
        Let $M$ be the subtree pruned by the SPR move that creates $T_1$.  Without
        loss of generality, let the split induced by $e_1$ in $T_0$ be $A B C \mid D E M$
        and the split induced by $e_2$ in $T_0$ be $A B \mid C D E M$, where 
        $A$,$B$,$C$,$D$,$E$, and $M$ are the leaves of subtrees of $T_0$.  Let
        $T_X$ refer to the subtree with leaves only from the set $X$.

        If $T_M$ is pruned to create $T_1$, then we have that $T_1$ contains the 
        splits: $A B M \mid C D E$ and $A B \mid M C D E$.  
        If $T_M$ is also pruned to create $T_2$, then we have that $T_2$ contains the
        splits: $A B C M \mid D E$ and $A B C\mid M D E$. Thus, $T_1$ and $T_2$ differ
        by a single NNI move (swapping $T_C$ and $T_M$), and the hypothesis holds.

        So, assume that $T_M$ is not pruned to create $T_2$, but instead that $e$ is 
        pruned in the other direction.  Let $N = S - M$, where $S$ is the set of leaves
        of $T$.  
        Since $e_1$ and $e_2$ 
        share an endpoint, at least one of them must be the edge pruned, $e$.  If both
        are $e$, then $T_1 = T_2 = T$, and the hypothesis is trivially true.  
        If only one, say $e_2$, is $e$, then $e_1$ must be a neighbor of $e$ in $T$
        which implies $T_2 = T_1 =T$, and again the hypothesis is trivially true.

    $\Longrightarrow$:  
        By assumption $T_1$ and $T_2$ differ by a single NNI move.  By definition of 
        the NNI move, there exists an edge $e'\in E(T_1)$ that
        when removed, breaks $T_1$ into 4 distinct subtrees, $T_A,T_B,T_C,T_D$ with
        leaf sets, $A,B,C,D$,
        and the split $A B \mid C D$ belongs to $T_1$ while $B C \mid A D$
        belongs to $T_2$.  Since both $T_1$ and $T_2$ are in the same
        orbit, the same edge $e$ is pruned to create both.  We note that 
        since they differ by only the NNI move, that, by the argument above,
        the pruning of $e$ must occur in the same direction for both to be result in 
        non-trivial trees.  Further, $e$ must prune one of the subtrees: $A,B,C,D$, 
        since only one move is allowed and $T_1$ and $T_2$ contain exactly the same trees.  
        Without loss of generality, assume that $A$ is pruned.  We note
        the trees induced by the leaves of $B,C,D$ are identical for
        these trees:  $T_0 |_{L(B\cup C\cup D)} = T_1 |_{L(B\cup C\cup D)} = T_2 |_{L(B\cup C\cup D)}$.
        It follows that $e_1$ and $e_2$ share a common endpoint, namely 
        the intersection point of $B,C,D$.
\end{proof}

We can immediately give an upper bound on the NNI-walk of the SPR 
neighborhood as $O(n^2)$ steps.  The underlying idea is to traverse
each orbit separately, and then link these paths to form a traversal
of the entire SPR neighborhood:

\begin{lemma}  The SPR neighborhood has an NNI-walk of length 
    $O(n^2)$.
    \label{upper-bd-lemma}
\end{lemma}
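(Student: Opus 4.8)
The plan is to decompose the SPR neighborhood into its $2n-3$ orbits, traverse each orbit with a short NNI-walk, and splice these walks together at the common tree $T_0$. First I would fix an edge $e$ of $T_0$ and analyze the NNI-adjacency structure among the trees of the orbit $O_e$. By Lemma~\ref{overlap-lemma}, two trees $T_1, T_2 \in O_e$ differ by a single NNI move exactly when their target edges $e_1, e_2$ share a common endpoint in $T_0$. Pruning $e$ in a fixed direction removes a subtree $M$ and leaves a connected tree $T_0 - M$ whose edges are precisely the admissible target edges; the relation ``shares an endpoint'' is then exactly the adjacency relation of the \emph{line graph} of $T_0 - M$. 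Since $T_0 - M$ is a tree, its line graph is connected, so the trees produced by pruning $e$ in this direction form a connected subgraph under NNI moves. Both pruning directions yield $T_0$ (regrafting onto an edge incident to $e$ returns the original tree, as noted in Figure~\ref{nbhd-fig}), so $T_0 \in O_e$ and, within $O_e$, is NNI-adjacent to trees of each direction. Hence the NNI-adjacency graph on all of $O_e$ is connected and contains $T_0$.

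Second, because this graph is connected on $|O_e| = 2n-5 = O(n)$ vertices (Observation~\ref{obs}), I would choose any spanning tree of it rooted at $T_0$ and walk a depth-first Euler tour of that spanning tree, traversing each spanning-tree edge twice. Every step of this tour is a single NNI move, every tree of $O_e$ is visited, and the tour is a \emph{closed} walk that begins and ends at $T_0$, of length at most $2(|O_e|-1) = O(n)$.

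Third, I would concatenate these closed walks. By Observation~\ref{obs} there are $2n-3$ orbits, each containing $T_0$, and their union is all of $N_{SPR}$. Running the closed orbit-walks one after another, each starting and ending at $T_0$, produces a single NNI-walk that visits every tree of every orbit, hence every tree of $N_{SPR}$. Its total length is $(2n-3)\cdot O(n) = O(n^2)$, as claimed.

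The main obstacle is the first step: verifying rigorously that the NNI-adjacency graph on a single orbit is connected and that $T_0$ is reachable within it. This requires translating the combinatorial ``common endpoint'' condition of Lemma~\ref{overlap-lemma} into the line-graph picture, correctly stitching the two pruning directions of $e$ (which meet precisely at $T_0$), and checking the degenerate cases where $e$ is a pendant edge or where a pruning direction yields only trivial regraftings. Once connectivity and the hub role of $T_0$ are established, the spanning-tree traversal and the concatenation are routine.
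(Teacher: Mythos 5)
Your proposal is correct and follows essentially the same route as the paper's proof: decompose $N_{SPR}$ into the $2n-3$ orbits, use Lemma~\ref{overlap-lemma} to identify NNI-adjacency within an orbit with the target edges sharing an endpoint in $T_0$, traverse each orbit by a depth-first tour of cost $O(n)$ anchored at $T_0$, and concatenate the closed walks at $T_0$ for a total of $O(n^2)$. Your line-graph formulation of the adjacency structure (and your explicit handling of the two pruning directions meeting at $T_0$) makes rigorous a connectivity claim that the paper states only implicitly, but it is a refinement of the same argument rather than a different one.
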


\begin{proof} We will break the NNI-walk of the SPR neighborhood
into a NNI-walk of the orbit of each edge in $T_0$.  Since each
orbit contains the initial tree $T_0$, we can glue together the
walks of the orbit to make a walk of the entire space.  We note
that since each orbit contains at most $2n-5$, walking the $2n-3$
orbits in this fashion yields a walk with the number of
steps is bounded by $2(2n-5)(2n-3) = O(n^2)$.

It suffices to show that there is a $2$-walk of each orbit $O_e$
for $e \in E(T_0)$.  Each tree, $T\in O_e$, is created by pruning
the edge $e$ in $T_0$ and regrafting the pruned subtree to another
edge in $T_0$ (see Figure~\ref{nbhd-fig}).  Every tree in the orbit
corresponds to an edge in $T_0$ (namely, the target edge), and the
trees in the orbit are connected exactly when their target edges
share an endpoint in $T_0$ by Lemma~\ref{overlap-lemma}.  Thus,
the orbit can be traversed by at most $2(2n-3)$ steps by starting
at $T_0$ and following a depth-first-search of the tree (each tree
in the orbit is visited at most once on the way ``down'' the search
and once on the way ``up'' the search).
\end{proof}

To show the lower bound takes more work.  
It follows from this lemma that every orbit has very small overlap with 
the other orbits:

\begin{lemma}  Let $T_1, T_2 \in N_{SPR}(T_0)$ such that $T_1$ and $T_2$
    are a single NNI move apart.  Then 
    $d_{NNI}(T_0,T_1), d_{NNI}(T_0,T_2) \leq 2$.
    \label{isolated-lemma}
\end{lemma}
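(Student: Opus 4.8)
The plan is to work inside the orbit decomposition and reduce the claim to a statement about how far the two SPR moves regraft. Since a single NNI changes exactly one split of a tree, I would first record the local picture of the NNI move joining $T_1$ and $T_2$: there is an internal edge whose four incident subtrees have leaf sets $A$, $B$, $C$, $D$, with $T_1$ carrying the split $AB \mid CD$ and $T_2$ the split $AC \mid BD$, while every other split is common to the two trees. Because each of $T_1$ and $T_2$ is obtained from $T_0$ by pruning some subtree $M_i$ (along an edge $e^{(i)} \in E(T_0)$) and regrafting it at a target edge, and because an SPR move that regrafts $d$ edges away from the pruned location can be realized by $d$ NNI moves (the decomposition already used in the proof of Lemma~\ref{upper-bd-lemma}), it suffices to bound each of the two regraft distances by $2$. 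I would phrase the whole argument as a bound on those regraft distances rather than by exhibiting explicit NNI paths.

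The argument then splits on whether the two moves prune along the same edge of $T_0$ (so that $T_1, T_2$ lie in a common orbit $O_e$) or along different edges. In the common-orbit case I would invoke Lemma~\ref{overlap-lemma} directly: the two target edges $e_1, e_2$ share a common endpoint $v$ in $T_0$, and I would then try to use the position of $v$ relative to the contracted attachment point of the pruned subtree $M$ to locate both target edges within two edges of the original attachment, which would give $d_{NNI}(T_0, T_i) \le 2$. In the cross-orbit case I would instead compare split sets directly: the subtrees $M_1$ and $M_2$ together with the single NNI edge between $T_1$ and $T_2$ should force the two SPR moves to disagree with $T_0$ only in the immediate neighborhood of the NNI edge, so that only a bounded number of splits of $T_0$ are destroyed by either move, and I would convert that into the desired regraft-distance bound.

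The main obstacle is the common-orbit case, precisely the step of turning ``the target edges share an endpoint'' into ``each target edge lies within two edges of the original attachment point.'' Controlling the NNI distance to $T_0$ — as opposed to the NNI distance between $T_1$ and $T_2$, which the triangle inequality already pins to within $1$ — is where the real content sits, because a priori a subtree could be regrafted far from its origin while two such long regrafts remain a single NNI apart. I would therefore spend most of the effort showing that the NNI-adjacency hypothesis, combined with both trees being genuine SPR-neighbors of $T_0$, rules out long-range regrafts, tracking exactly which splits of $T_0$ survive each move; the nested sub-case, where the pruned subtrees satisfy $M_2 \subseteq M_1$ (or otherwise overlap), is the delicate situation I expect to need the most careful bookkeeping.
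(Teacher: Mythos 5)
Your plan for the cross-orbit case is sound in spirit and close to what the paper actually does: the paper conditions on how the pruned leaf set $M_1$ meets the four NNI subtrees $A,B,C,D$ and tracks which splits of $T_0$ can survive, which is exactly the bookkeeping you describe. The genuine gap is the common-orbit case, and it is not a gap you can close: the claim you are trying to prove there is false, so the step you yourself flag as the ``main obstacle'' (converting ``the target edges share an endpoint'' into ``both target edges lie within two edges of the original attachment point'') cannot be carried out. The worry you raise --- that a subtree could be regrafted far from its origin while two such long regrafts remain a single NNI apart --- is not a hypothetical; it happens. Let $T_0$ be the caterpillar on leaves $1,2,\ldots,n$ (cherries $(1,2)$ and $(n-1,n)$), let $e$ be the pendant edge of leaf $1$, let $T_1$ be obtained by pruning $e$ and regrafting leaf $1$ onto the pendant edge of leaf $n$, and let $T_2$ be obtained by regrafting leaf $1$ onto the pendant edge of leaf $n-1$. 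Both trees lie in $O_e$, and the two target edges share the endpoint adjacent to the cherry $(n-1,n)$, so by Lemma~\ref{overlap-lemma} --- or by direct inspection, since $T_1$ and $T_2$ differ by the NNI swapping leaves $n-1$ and $n$ --- they are a single NNI move apart. Yet $T_1$ is the caterpillar $(2,3),4,\ldots,n-1,(n,1)$, which shares no nontrivial split with $T_0$: every split of $T_0$ puts $1$ with $2$, while every split of $T_1$ puts $1$ with $n$. Since one NNI move changes exactly one split, $d_{NNI}(T_0,T_1) \geq n-3$, and the same holds for $T_2$. So for $n \geq 6$ the common-orbit case of the lemma fails outright, and no amount of split bookkeeping will rescue it.

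What this reveals is that Lemma~\ref{isolated-lemma} is only correct (and is only needed) when $T_1$ and $T_2$ lie in different orbits, i.e., when there is no single edge $e$ with $T_1, T_2 \in O_e$. You should be aware that the paper's own proof has the same soft spot: it dispatches the common-orbit case with a one-line appeal to Lemma~\ref{overlap-lemma}, which characterizes NNI adjacency within an orbit but says nothing about distance to $T_0$, so that appeal does not justify the bound either. The saving grace is that the downstream argument never uses the common-orbit case: in the main theorem the lemma is invoked only to conclude that trees far from $T_0$ have no NNI neighbors in \emph{other} orbits, and long NNI paths inside a single orbit are what Lemma~\ref{circuit-lemma} is for. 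So the correct repair of your proposal is to delete the common-orbit branch entirely, restate the lemma for pairs in distinct orbits, and devote the effort to the cross-orbit split-tracking argument you sketch, which is where the paper's Cases~1--3 live.
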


\begin{proof}
We note that if there exists an $e\in E(T_0)$ such that $T_1,T_2\in O_e$,
then by Lemma~\ref{overlap-lemma}, the lemma holds.

So, let us assume that there exists $e_1,e_2 \in E(T_0)$,
$T_1 \in O_{e_1}$, $T_1 \not \in O_{e_2}$, $T_1 \in O_{e_1}$ and $T_1 \not \in O_{e_2}$.
Let $M_i$ be the leaves of the subtree pruned with $e_i$ from $T_0$ to create tree $T_i$, $i=1,2$.
Since $T_1$ and $T_2$ are a single NNI move apart.  By definition, there exists a split in 
$T_1$, $A B \mid C D$ that is rearranged in $T_2$: $B C \mid A D$.

We will argue, by cases, that both 
$T_1$ and $T_2$ are within 2 NNI moves of $T_0$.
Without loss of generality, we will assume that $M_1 \cap T_A \neq \emptyset$

{\bf Case 1:}  $M_1 \subsetneq T_A$.
    Then, let $A' = A - M_1$.  So, we have that $T_1$ contains the split
    $A' M_1 B | C D$ and $T_2$ contains the split $B C \mid A' M_1 D$.
    Since $T_1$ is only one SPR move from $T_0$, the structure of the 2 trees
    is identical without $M_1$, that is, 
    $T_1 |_{A'\cup B \cup C \cup D} = T_0 |_{A' \cup B \cup C \cup D)}$,
    and $T_0$ includes and edge that splits $A'$ and $B$ from $C$ and $D$.
    Since $T_2$ does not contain such an edge, the move that creates it
    must prune one of $T_{M_1}$, $T_{A'}$, $T_B$, $T_C$, or $T_D$.  Pruning $T_{M_1}$ is
    not possible since $T_1$ and $T_2$ are in different orbits.  Pruning
    $T_{A'}$ is only possible if $T_{M_1}$ and $T_B$ have the same parent, in which
    case the $d_{NNI}(T_0,T_1), d_{NNI}(T_0,T_2) = 2$ and the lemma holds.
    Pruning $T_B$ to create $T_2$ means that the subtree $T_{M_1}$ is on the
    edge separating $A'$ and $B$ from $C$ and $D$ in $T_0$ and 
    $d_{NNI}(T_0,T_1), d_{NNI}(T_0,T_2) = 2$.  Lastly, pruning $T_B$, $T_C$, or
    $T_D$ is only possible if $T_0 = T_1$, in which case, $d_{NNI}(T_0,T_1) = 0, d_{NNI}(T_0,T_2) = 1$.
    
{\bf Case 2:} $M_1 = A$.
    So, we have that $T_1$ contains the split
    $M_1 B | C D$ and $T_2$ contains the split $B C \mid M_1 D$.
    We have three possibilities for $T_0$, namely, it could contain
    one of the following three splits:  $M_1 B \mid C D$, $B C\mid M_1 D$,
    or $B D\mid M_1 C$.  In each of these cases, we have 
    $d_{NNI}(T_0,T_1), d_{NNI}(T_0,T_2) \leq 1$ and the lemma holds.

{\bf Case 3:} $M_1 \supsetneq A$.
    So, $M_1 \cap B \neq \emptyset$.  Since $T_{M_1}$ is a subtree of $T_1$ and
    of $T_2$, it must contain all of $B$.  If $M_1 = A \cup B$, then the target edges
    in $T_1$ and $T_2$ must separate $C$ and $D$, and are identical. Similarly, 
    if $M_1 \subsetneq A \cup B$, $M_1$ must contain all of $C$ or all of $D$,
    and the taget edges in $T_1$ and $T_2$ must preserve the rooting of the 
    remaining subtree, and thus, are identical.  Thus,
    $d_{NNI}(T_0,T_1), d_{NNI}(T_0,T_2) = 0$.

\end{proof}

\begin{lemma}  Let $U \subseteq O_e$ be connected
    consist of trees more than $2$ NNI moves from $T_0$.
    Let $n = |U|$.
    Then any NNI-circuit of $U$ 
    takes at least ${\frac{3}{2}(n-1)}$ steps.
    \label{circuit-lemma}
\end{lemma}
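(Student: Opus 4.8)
The plan is to convert the statement into a purely combinatorial lower bound for closed walks in a graph assembled from triangles, and then to prove that bound by peeling off pendant pieces one at a time. First I would describe the NNI-adjacency graph $G_U$ carried by $U$. By Lemma~\ref{overlap-lemma}, two trees of the orbit $O_e$ are a single NNI move apart exactly when their target edges share an endpoint in $T_0$, so $G_U$ is an induced subgraph of the line graph of $T_0$: each degree-three vertex $x$ of $T_0$ contributes a triangle on the (at most three) orbit-trees whose target edges meet at $x$, and the internal edges of $T_0$ become the vertices shared between two such triangles. Thus the orbit's NNI-graph is a ``tree of triangles,'' and a connected $U$ sitting inside it is a block graph each of whose blocks is either a single edge ($K_2$) or a triangle ($K_3$). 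Writing $b_2$ and $b_3$ for the number of such blocks, the block-tree identity gives $n-1 = b_2 + 2b_3$, where $n=|U|$.

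Before counting I must argue that an NNI-circuit, which a priori is a closed walk in all of $UB(n)$, gains nothing by leaving the orbit; this is exactly where the hypothesis that the trees of $U$ are \emph{more than two} NNI moves from $T_0$ is needed. By Lemma~\ref{isolated-lemma}, any two NNI-adjacent trees of $N_{SPR}$ lie within two NNI moves of $T_0$, so a tree of $U$ has no neighbor inside $N_{SPR}$ except along the triangle structure of its orbit; any excursion that dips outside $N_{SPR}$ in order to reconnect two trees of $U$ is at least as long as the corresponding route through the triangles. I expect this localization to be the main obstacle, and I would handle it by charging each maximal off-orbit segment of the circuit to a walk of equal-or-greater length inside $G_U$, thereby reducing the problem to bounding closed covering walks in $G_U$ itself.

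With the circuit confined to $G_U$, I would show that its length is at least $2b_2 + 3b_3$ by induction on the number of blocks, deleting a leaf of the block-tree at each step. A pendant bridge to a leaf is a cut edge, so any closed covering walk crosses it an even and positive number of times, contributing at least $2$; removing that leaf and applying the inductive hypothesis yields at least $\tfrac32(n-2)+2 \ge \tfrac32(n-1)$. A pendant triangle with cut vertex $u$ and two private vertices $a,b$ forces a closed excursion from $u$ that visits both $a$ and $b$; since $a$ and $b$ are joined to the rest of $G_U$ only through this triangle, such an excursion must use the three edges $ua,\,ab,\,bu$ and so contributes at least $3$, giving at least $\tfrac32(n-3)+3 = \tfrac32(n-1)$.

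Finally I would assemble the pieces. Since $2b_2 + 3b_3 = \tfrac32 b_2 + 3b_3 + \tfrac12 b_2 \ge \tfrac32(b_2 + 2b_3) = \tfrac32(n-1)$, every NNI-circuit of $U$ uses at least $\tfrac32(n-1)$ steps, with equality precisely when $b_2 = 0$, i.e.\ when $U$ is a union of whole triangles. This is consistent with the ``isolated triangles on the outer edge of the neighborhood'' of Figure~\ref{nbhd-fig}, which realize the tight case, and it matches Observation~\ref{obs} on how orbits overlap. The genuinely delicate point remains the confinement argument of the second paragraph; the triangular block structure and the peeling induction are otherwise routine once that reduction is in place.
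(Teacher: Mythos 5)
Your combinatorial core is correct, and it is a genuinely tidier organization of the same underlying idea as the paper's proof, so the comparison is worth recording. The paper also argues by a peeling induction, but it peels one vertex at a time: it chooses the tree $x\in U$ closest to $T_0$, observes that not all of $x$'s (at most four) neighbors can lie in $U$, and then cases on whether $x$ has one, two, or three neighbors in $U$, charging $2$, $3$ (or $4$), etc.\ extra steps before invoking the inductive hypothesis on the ``subtrees rooted at $x$'s neighbors.'' The tree-of-triangles structure that makes phrases like ``subtrees rooted at'' meaningful is left implicit there; you make it explicit via Lemma~\ref{overlap-lemma} (the orbit graph is an induced subgraph of the line graph of $T_0$, hence a block graph with $K_2$ and $K_3$ blocks) and peel whole leaf blocks rather than single vertices. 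Your version buys three things: the identity $n-1=b_2+2b_3$ together with the bound $2b_2+3b_3$ is sharper than $\tfrac{3}{2}(n-1)$; it characterizes the equality case ($b_2=0$, a union of whole triangles, matching the isolated triangles of Figure~\ref{nbhd-fig}); and the two peeling cases (pendant bridge costs $\geq 2$, pendant triangle costs $\geq 3$) are cleaner than the paper's degree case analysis. One small slip: a pendant-triangle excursion need not use all three edges $ua$, $ab$, $bu$ (it may be $u\to a\to u\to b\to u$); what is true, and all you need, is that the excursions covering $a$ and $b$ cost at least $3$ steps in total.

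On the confinement issue you flag as the main obstacle: the paper never addresses it at all. Its proof is only valid if an ``NNI-circuit of $U$'' means, by definition, a closed walk that stays inside $U$ --- otherwise its claim that a vertex with one neighbor in $U$ forces an edge to be traversed twice is false, since the circuit could enter $x$ through a tree outside $U$. Under that reading your second paragraph is unnecessary. Under the stronger reading, where the circuit may wander through all of $UB(n)$, your charging sketch is the right instinct but is incomplete as stated: you would need that for $X,Y\in U$ the distance in $G_U$ is a lower bound on the NNI distance in the full treespace, so that an off-orbit segment (which could have length as small as $2$) cannot replace a long route through the triangles; neither Lemma~\ref{overlap-lemma} nor Lemma~\ref{isolated-lemma} gives this for distances greater than $1$. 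Since the published proof silently assumes confinement, your argument is at least as rigorous as the paper's; just state explicitly which definition of circuit you are using, and if it is the confined one, delete the charging paragraph.
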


\begin{proof}
    By induction on the size of $|U|$.

    For $|U|=1$:  This is trivially true.

    For $|U|>1$, choose $x \in U$ closest to $T_0$.  Since $x$ is closest to $T_0$,
        not all of the neighbors of $x$ are in $U$ (if so, then there is an element
        in $U$ closer to $T_0$).  Since $T_0$ is binary, it has at most 4 neighbors 
        in $N_{SPR}$.  If $x$ has one neighbor in $U$, then, a circuit of $U$ must 
        traverse the same edge from $x$ to its neighbor twice, and the number of
        steps needed is at least two more than
        the number of steps needed for the smaller set $|U| - \{x\}$.
        By inductive hypothesis, this smaller set takes at least ${\frac{3}{2}(|U - \{x\}| -1)}$ steps.
        So, the number of steps for $U$ is:
   
        $$
                \frac{3}{2}(|U-\{x\}|-1)+2
                \geq \frac{3}{2}(|U|-1)
        $$

        If $x$ has two neighbors in $U$, then call the subtrees rooted
        at $x$'s neighbors, $U_1$ and $U_2$.  If the neighbors of $x$ are
        connected, then it takes 3 steps to visit $x$ in a circuit of $x$, $U_1$, 
        and $U_2$.  If they are not connected, 
        it takes 4 steps.  Thus, by inductive hypothesis, the number of steps needed is:
            $$\frac{3}{2}(|U_1|-1)+\frac{3}{2}(|U_2|-1)+3
                \geq \frac{3}{2}(|U|-1) $$

        If $x$ has 3 neighbors in $U$, then by similar argument, we have the lower
        bound.

        If $x$ has 4 neighbors in $U$, then it is not the closest element of $U$ to
        $T_0$, giving a contradiction.
\end{proof}

From the last two lemmas, we have that the orbits are mostly isolated-- the only
trees having neighbors from outside the orbit are within 2 steps of $T_0$.  Each 
of these isolated arms of the orbit must be visited in an NNI walk of the SPR 
neighborhood, and the walks of the isolated arms take many extra steps.  This yields our 
lower bound:

\begin{lemma}  It takes $\Omega(n)$ extra steps to make a circuit of an orbit.
    \label{lower-bd-lemma}
\end{lemma}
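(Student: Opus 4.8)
The plan is to locate inside the orbit a large connected family of trees that all lie more than two NNI moves from $T_0$, and then invoke Lemma~\ref{circuit-lemma} to force $\Omega(n)$ repeated visits. Fix an edge $e$ with induced split $A \mid B$ and orient the pruning so that the pruned subtree sits on the smaller side; then the target edges of the orbit range over the edges of the larger subtree, of which there are $\Omega(n)$ since one side of any split has at least $n/2$ leaves, so the argument will apply to every orbit. By Lemma~\ref{overlap-lemma}, two trees of $O_e$ are NNI-adjacent precisely when their target edges share an endpoint in $T_0$, so the NNI-adjacency graph of the orbit is essentially a copy of the edge-adjacency structure of $T_0$ on this side, with $T_0$ itself occupying the edges incident to $e$.

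First I would translate ``far from $T_0$'' into a purely combinatorial condition on target edges: a tree of $O_e$ whose target edge $f$ lies at edge-distance at least a fixed constant (say $4$) from $e$ in $T_0$ should be more than two NNI moves from $T_0$, since NNI-distance from $T_0$ inside the orbit is governed by the edge-adjacency distance of $f$ to the edges incident to $e$. Because $T_0$ is binary (maximum degree three), the ball of that radius around $e$ contains only $O(1)$ edges, so all but $O(1)$ of the $\Omega(n)$ target edges are ``far.'' Deleting that bounded ball from $T_0$ leaves at most $O(1)$ subtrees, each contributing one connected piece of the far set in the edge-adjacency graph; hence the far trees form $O(1)$ connected components whose sizes sum to $\Omega(n)$, and the largest component $U$ still has $|U| = \Omega(n)$. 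This $U$ is exactly the kind of set Lemma~\ref{circuit-lemma} applies to: connected, contained in $O_e$, and entirely beyond NNI-distance two from $T_0$.

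The last step is to convert the lower bound on a circuit of $U$ into a bound on the number of extra steps in a circuit of the whole orbit. By Lemma~\ref{isolated-lemma}, no tree of $U$ has an NNI-neighbor outside $O_e$, and within $O_e$ every tree of $U$ except those on its boundary has all of its neighbors in $U$; consequently any NNI-circuit of the orbit must cover $U$ using only moves internal to the orbit, and the sub-walk it spends on $U$ is itself an NNI-circuit of $U$. Lemma~\ref{circuit-lemma} then forces at least $\tfrac{3}{2}(|U|-1)$ such steps, whereas visiting each of the $|U|$ trees once would cost only $|U|$; the difference $\tfrac{1}{2}|U| - \tfrac{3}{2} = \Omega(n)$ is pure overhead, giving the claimed $\Omega(n)$ extra steps.

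I expect the main obstacle to be the translation in the second paragraph: making precise that edge-distance at least a fixed constant from $e$ implies NNI-distance greater than two, and controlling how the far set decomposes so that one component is guaranteed to have linear size even for balanced $T_0$ (where bounded degree is what saves us). Care is also needed in the bookkeeping of the final paragraph, since a circuit of the orbit may enter and leave $U$ several times; I would argue that these excursions can be spliced into a single closed walk on $U$ without decreasing the step count, so that the bound of Lemma~\ref{circuit-lemma} applies verbatim.
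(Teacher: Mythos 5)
Your proposal is correct and follows essentially the same route as the paper's proof: both isolate a connected, linear-size subset of the orbit consisting of trees more than two NNI moves from $T_0$ (using Lemma~\ref{overlap-lemma} for the adjacency structure and Lemma~\ref{isolated-lemma} to rule out neighbors in other orbits), and then apply Lemma~\ref{circuit-lemma} to that subset to force $\Omega(n)$ extra steps. The only differences are cosmetic --- the paper discards the at most $8$ trees with neighbors outside the orbit and pigeonholes over the two resulting connected pieces, whereas you discard a bounded-radius ball of target edges around $e$ and take the largest of $O(1)$ components --- and your explicit handling of the edge-distance-to-NNI-distance translation and of the splicing bookkeeping is, if anything, more careful than the paper's own write-up.
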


\begin{proof}
    Let $e\in E(T_0)$ and $O_e$ its orbit.
    Since each orbit has $2n-5$ trees (Observation~\ref{obs}) and by
    Lemma~\ref{overlap-lemma}, at most $8$ have neighbors from $N_{SPR} - O_e$.

    It follows from  Lemma~\ref{overlap-lemma}, the $2n-13$ remaining trees are in two connected
    sets.  By the Pigeonhole Principal, one set has at least $n-7$ trees.
    By Lemma~\ref{circuit-lemma}, it takes $\Omega(((n-7)-1)/2) = \Omega(n)$ extra steps
    to visit the larger connected set.

    Thus, it takes $\Omega(n)$ extra steps to traverse the orbit.
\end{proof}

The above lemmas can be combined to show that $\theta(n^2)$ extra
steps are needed to traverse the neighborhood, since there are $2n-3$
orbits, and each has minimal overlap with other orbits.  

\begin{theorem}  Every SPR neighborhood takes $(2n-6)(2n-7) + \Theta(n^2)$
steps to traverse.
\end{theorem}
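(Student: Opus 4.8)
The plan is to sandwich the length $L$ of the shortest NNI-walk that visits every tree in $N_{SPR}(T_0)$ between $(2n-6)(2n-7)+\Omega(n^2)$ and $(2n-6)(2n-7)+O(n^2)$, so that the number of steps in excess of the $(2n-6)(2n-7)=\Theta(n^2)$ trees themselves is exactly $\Theta(n^2)$. The upper bound is immediate: Lemma~\ref{upper-bd-lemma} already exhibits a walk of total length $O(n^2)$, and since any walk must satisfy $L \geq (2n-6)(2n-7)-1$, the number of excess steps is trivially at most $O(n^2)$.

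The substance is the lower bound, and I would first set up the geometry of the neighborhood. By Observation~\ref{obs} the neighborhood is covered by $2n-3$ orbits, each of size $2n-5$, pairwise sharing only the $2n-6$ NNI-trees together with $T_0$ itself. Lemma~\ref{isolated-lemma} shows that every tree NNI-adjacent to a tree in a \emph{different} orbit lies within $2$ NNI moves of $T_0$; equivalently, the trees of $O_e$ that are more than two NNI moves from $T_0$ --- the ``deep arm'' of the orbit --- have \emph{all} of their NNI-neighbors inside $O_e$. Consequently these deep arms are pairwise disjoint across the $2n-3$ orbits, and each is an isolated region of the neighborhood reachable by the walk only through the small collar of near-$T_0$ trees (at most $8$ per orbit, by Lemma~\ref{overlap-lemma}).

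Next I would localize the cost and then sum. Lemma~\ref{lower-bd-lemma} (via the Pigeonhole split and the circuit bound of Lemma~\ref{circuit-lemma}) shows that each orbit forces a connected deep component of at least $n-7$ trees whose internal NNI-structure is tree-like, so any traversal of it costs at least $\tfrac{3}{2}$ times its size, that is, $\Omega(n)$ steps beyond a mere Hamiltonian traversal of that component. Because the deep arms are disjoint, internally connected, and cut off from the rest of the neighborhood except at the near-$T_0$ collar, any global NNI-walk must make a separate excursion into each one. I would charge each repeated (backtracking) step of an excursion to the \emph{unique} orbit whose deep arm it traverses --- a well-defined assignment precisely because, by Lemma~\ref{isolated-lemma}, an edge interior to a deep arm belongs to no other orbit. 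Summing the $\Omega(n)$ excess steps over the $2n-3=\Omega(n)$ disjoint arms yields $\Omega(n^2)$ excess steps with no double counting.

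The main obstacle is exactly this charging step: a shortest walk is free to interleave its visits to different orbits, so the per-orbit deficit of Lemma~\ref{lower-bd-lemma} must be shown to be \emph{additive} rather than shared. The resolution rests on the disjointness of the deep arms and the $O(1)$-per-orbit size of their boundary near $T_0$: every repeated edge inside a deep arm lies in exactly one orbit, so the $\Omega(n)$ deficits cannot be amortized across orbits and genuinely accumulate. Combining the matching $O(n^2)$ and $\Omega(n^2)$ bounds on the excess then gives $L = (2n-6)(2n-7) + \Theta(n^2)$, as claimed.
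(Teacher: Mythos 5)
Your proposal is correct and follows essentially the same route as the paper: the upper bound via Lemma~\ref{upper-bd-lemma}, and the lower bound by applying Lemmas~\ref{isolated-lemma}, \ref{circuit-lemma}, and~\ref{lower-bd-lemma} to each of the $2n-3$ orbits and summing to get $\Omega(n^2)$ excess steps. The only difference is that you explicitly justify why the per-orbit $\Omega(n)$ deficits add up --- charging repeated steps inside a deep arm to the unique orbit containing it, using the disjointness of the deep arms --- a point the paper's proof leaves implicit, so your write-up is if anything slightly more careful on that step.
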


\begin{proof}
    The upper bound follows by Lemma~\ref{upper-bd-lemma}.
    
    For the lower bound:
    by Lemma~\ref{isolated-lemma}, every orbit, $O_e$ has $\Omega(n)$ trees
    that have no neighbors in other orbits.   
    By Lemma~\ref{lower-bd-lemma}, it takes $\Omega(n)$ extra steps to
    traverse these regions of $O_e$.
    Since, by Theorem~\ref{as-thm}, there are
    $2n-3$ orbits, we have that any path must take $\geq (2n-3)\Omega(n) = \Omega(n^2)$
    extra steps.
\end{proof}

\section{Acknowledgments}

For spirited discussions, we would like to thank the members of the 
Treespace Working Group at CUNY:
Ann Marie Alcocer,
Kadian Brown,
Samantha Daley,
John DeJesus,
Eric Ford, 
Michael Hintze,
Daniele Ippolito,
Joan Marc,
Oliver Mendez,
Diquan Moore, 
and 
Rachel Spratt.

This work was supported by grants from the US National Science 
Foundation programs in mathematical biology and computational mathematics
(NSF \#09-20920) and the New York City Louis Stokes Alliance for Minority Participation in 
Research (NSF \#07-03449).
Some of this work was completed at the Isaac Newton Institute for 
Mathematical Sciences at Cambridege University, and we gratefully 
acknowledge travel funding
to the last author provided by the institute and the
Lehman College-IBM Research travel fund.

\small
\bibliographystyle{plain}
\bibliography{treespace}

\end{document}